\newtheorem{theorem}{Theorem}
\newcommand{\C}{\mathbb{C}}
\newcommand{\N}{\mathbb{N}}
\newcommand{\ler}[1]{\left( #1 \right)}
\newcommand{\lesq}[1]{\left[ #1 \right]}
\newcommand{\lers}[1]{\left\{ #1 \right\}}
\newcommand{\hohc}{\cH \otimes \cH^*}
\newcommand{\norm}[1]{\left|\left|#1\right|\right|}
\newcommand{\bra}[1]{\langle #1 |}
\newcommand{\ket}[1]{| #1 \rangle}
\newcommand{\Bra}[1]{\langle\langle #1 ||}
\newcommand{\Ket}[1]{|| #1 \rangle\rangle}
\newcommand{\be}{\begin{equation}}
\newcommand{\ee}{\end{equation}}
\newcommand{\ba}{\begin{array}}
\newcommand{\ea}{\end{array}}
\newcommand{\cH}{\mathcal{H}}
\newcommand{\cB}{\mathcal{B}}
\newcommand{\cK}{\mathcal{K}}
\newcommand{\cS}{\mathcal{S}}
\newcommand{\cC}{\mathcal{C}}
\newcommand{\cL}{\mathcal{L}}
\newcommand{\tr}{\mathrm{tr}}
\newcommand{\cA}{\mathcal{A}}
\newcommand\lh{\cL(\cH)}
\newcommand{\sh}{\cS\ler{\cH}}
\author[Gergely Bunth]{Gergely Bunth}
\address{Gergely Bunth, HUN-REN Alfr\'ed R\'enyi Institute of Mathematics\\ Re\'altanoda u. 13-15.\\Budapest H-1053\\ Hungary\\ and Department of Analysis and Operations Research, Institute of Mathematics, Budapest University of Technology and Economics\\ M\H{u}egyetem rkp. 3. \\ Budapest H-1111 \\ Hungary}
\email{bunth.gergely@renyi.hu}
\author[J\'ozsef Pitrik]{J\'ozsef Pitrik}
\address{J\'ozsef Pitrik, HUN-REN Wigner Research Centre for Physics\\ Budapest H-1525, Hungary\\ and HUN-REN Alfr\'ed R\'enyi Institute of Mathematics\\ Re\'altanoda u. 13-15.\\ Budapest H-1053\\ Hungary\\ and Department of Analysis and Operations Research, Institute of Mathematics \\Budapest University of Technology and Economics\\ M\H{u}egyetem rkp. 3. \\ Budapest H-1111\\ Hungary}
\email{pitrik.jozsef@renyi.hu}
\author[Tam\'as Titkos]{Tam\'as Titkos}
\address{Tam\'as Titkos, Corvinus University of Budapest\\ Department of Mathematics\\ Fővám tér 13-15.\\ Budapest H-1093\\Hungary\\ 
and \\ HUN-REN Alfr\'ed R\'enyi Institute of Mathematics\\ Re\'altanoda u. 13-15.\\ Budapest H-1053\\Hungary}
\email{tamas.titkos@uni-corvinus.hu}
\author[D\'aniel Virosztek]{D\'aniel Virosztek}
\address{D\'aniel Virosztek, HUN-REN Alfr\'ed R\'enyi Institute of Mathematics\\ Re\'altanoda u. 13-15.\\Budapest H-1053\\ Hungary}
\email{virosztek.daniel@renyi.hu}
\date{}
\subjclass[2020]{Primary: 49Q22; 81P16. Secondary: 81Q10.}
\keywords{quantum optimal transport, quantum channels, Petz recovery map, swap transposition}
\thanks{Bunth was supported by the Momentum Program of the Hungarian Academy of Sciences (grant no. LP2021-15/2021); Pitrik was supported by the “Frontline” Research Excellence Programme of the Hungarian National Research, Development and Innovation Office - NKFIH (grant no. KKP133827) and by the Momentum Program of the Hungarian Academy of Sciences (grant no. LP2021-15/2021);T. Titkos is supported by the Hungarian National Research, Development and Innovation Office (NKFIH) under grant agreements no. K134944 and no. Excellence\_151232, and by the Momentum program of the Hungarian Academy of Sciences under grant agreement no. LP2021-15/2021. D. Virosztek is supported by the Momentum program of the Hungarian Academy of Sciences under grant agreement no. LP2021-15/2021, by the Hungarian National Research, Development and Innovation Office (NKFIH) under grant agreement no. Excellence\_151232, and partially supported by the ERC Synergy Grant No. 810115.}
\title[The swap transpose and the Petz recovery map]{The swap transpose on couplings translates to Petz' recovery map on quantum channels}
\begin{document}

\begin{abstract}
In \cite{DPT-AHP}, De Palma and Trevisan described a one-to-one correspondence between quantum couplings and quantum channels realizing transport between states. The aim of this short note is to demonstrate that taking the Petz recovery map for a given channel and initial state is precisely the counterpart of the swap transpose operation on couplings. That is, the swap transpose of the coupling $\Pi_{\Phi}$ corresponding to the channel $\Phi$ and initial state $\rho$ is the coupling $\Pi_{rec}$ corresponding to the Petz recovery map $\Phi_{rec}.$
\end{abstract}

\maketitle

\section{Introduction}

\subsection{Motivation and main result}

The theory of optimal transportation has been an active field of research in recent decades, and it became one of the central topics in analysis with intimate connections to partial differential equations, fluid mechanics, probability theory, stochastic analysis and differential equations, and Riemannian geometry \cite{Ben-Bren00,bgl,Brenier-polar-en,Butkovsky,Hairer2,Hairer3,JKO-97a,JKO-97b,JKO-98,Otto-dissipative-PME,Sturm6,Sturm5}. Applied sciences such as machine learning, bioinformatics, and image processing also benefit from techniques derived from or inspired by optimal transport theory, see e.g. \cite{PeyreCuturi,bia1,SK1,Peyre1,Peyre2,MIalk7,arjovsky-wgan,CuturiLightspeed,Klein2025,Bunne2024,dss}.
\par
Moreover, a large variety of non-commutative or quantum versions of optimal transport theory have been proposed in the last few decades. Starting with the spectral distance of Connes and Lott \cite{Connes-Lott}, several essentially different concepts have been introduced. We mention the free probability approach of Biane and Voiculescu \cite{BianeVoiculescu}, see also \cite{Shlyakhtenko-free-monotone-transport,Shlyakhtenko-free-transport,Shlyakhtenko-free-Wass}, the semiclassical approach \cite{ZyczkowskiSlomczynski1,ZyczkowskiSlomczynski2}, the dynamical approach of Carlen, Maas, Datta, Rouzé, and Wirth  \cite{CarlenMaas-2,CarlenMaas-3,CarlenMaas-4, DattaRouze1, DattaRouze2, wirth-NC-transport-metric, Wirth-dual}, and many concepts based on quantum couplings, see the works of Caglioti, Golse, Mouhot, and Paul \cite{CagliotiGolsePaul, CagliotiGolsePaul-towardsqot, GolseMouhotPaul, GolsePaul-Schrodinger,GolsePaul-OTapproach, GolseTPaul-pseudometrics,GolsePaul-wavepackets}, Friedland, Eckstein, Cole, and \.Zyczkowski \cite{Friedland-Eckstein-Cole-Zyczkowski-MK,Cole-Eckstein-Friedland-Zyczkowski-QOT,BistronEcksteinZyczkowski}, Duvenhage \cite{Duvenhage1, Duvenhage-ext-quantum-det-bal,Duvenhage-quad-Wass-vNA, Duvenhage3}, and Beatty, Fran\'ca, Pitrik, and T\'oth \cite{TothPitrik,toth-pitrik-2025, beatty-franca}. A substantial part of the above mentioned current approaches to non-commutative optimal transport is covered by the book \cite{OTQS-book} and the surveys \cite{beatty2025wasserstein, Trevisan-review-QOT}.
\par
This note is concerned with a concept where quantum channels realize the transport, and quantum couplings are in a one-to-one correspondence with channels. This concept was introduced by De Palma and Trevisan \cite{DPT-AHP, DPT-lecture-notes}, and it is a quantum counterpart of the classical transport theory with quadratic cost --- see also \cite{BPTV-metric-24, BPTV-p-Wass, BPTV-Kantorovich, wirth-2025-triangle} for further recent developments in this direction. In this setting, we prove in an elementary way that two natural operations, the swap transpose on couplings, and taking the Petz recovery map \cite{Petz1986, Petz88, Accardi-Cecchini-82} on channels, are two sides of the same coin. We phrase the precise statement in Theorem \ref{thm:Petz-swap}--- all the necessary notions and notations will be introduced in Subsection \ref{subsec:basic-notions-notation}.

It is important to note that finite-dimensional versions of our results, with different conventions and terminology, have already been established. In \cite[Theorem 5.2]{Duvenhage2025quantumdetailed}, Duvenhage, Oerder, and van den Heuvel prove that the Accardi-Cecchini dual \cite{Accardi-Cecchini-82} of a completely positive map (on operators acting on a finite dimensional Hilbert space) coincides with the dual corresponding to swapping (without the transpose, in contrast to the convention we use). Furthermore, in \cite{Parzygnat-James-bayes-2023}, Parzygnat and Fullwood proved essentially the same finite-dimensional result in the context of Bayes' rules in the quantum setting --- see in particular eq. (14) and eq. (25) there. 

\paragraph*{{ \bf Acknowledgement.}} We thank Rocco Duvenhage and Arthur Parzygnat for drawing our attention to their closely related works \cite{Duvenhage2025quantumdetailed} and \cite{Parzygnat-James-bayes-2023} after the appearence of the first version of our manuscript on arXiv, and for their detailed explanations of the connections. We are also gratefut to Melchior Wirth for his comments on a preliminary version of this manuscript, and in particular for introducing us to related results in the von Neumann algebraic setting.

\subsection{Basic notions, notation} \label{subsec:basic-notions-notation}

Let $\cH$ be a separable complex Hilbert space, and let $\lh^{sa}$ denote the set of self-adjoint but not necessarily bounded linear operators on $\cH.$ Let $\cS(\cH)$ stand for the set of states, that is, positive trace-class operators on $\cH$ with unit trace. The space of all bounded operators on $\cH$ is denoted by $\cB(\cH),$ and we recall that the collection of trace-class operators on $\cH$ is denoted by $\mathcal{T}_1(\cH)$ and defined by $\mathcal{T}_1(\cH)= \lers{X \in \cB(\cH) \, \middle| \, \tr_{\cH}[\sqrt{X^*X}] < \infty}.$ Similarly, $\mathcal{T}_2(\cH)$ stands for the set of Hilbert-Schmidt operators defined by $\mathcal{T}_2(\cH)= \lers{X \in \cB(\cH) \, \middle| \, \tr_{\cH}[X^*X] < \infty}.$ The transpose $A^T$ of a bounded linear operator $A \in \cB(\cH)$ is the bounded operator on the dual space $\cH^*$ defined by the identity $\varphi(Ax)=\ler{A^T \varphi}(x)$ for all $x \in \cH$ and $\varphi \in \cH^*.$ The support of $A \in \cB(\cH)$ is the closure of its range, that is, $\mathrm{supp}(A)=\mathrm{cl}\ler{\mathrm{ran}(A)}.$ Let $\cK$ and $\cH$ be separable Hilbert spaces, and let $\Phi: \mathcal{T}_1(\cK) \to \mathcal{T}_1(\cH)$ be a bounded linear map. Then, its adjoint $\Phi^{\dagger}: \cB(\cH) \to \cB(\cK)$ is defined by the requirement that
\begin{align} \label{eq:dagger-adjoint-def}
\tr_{\cH}\lesq{\Phi(\rho)A}=\tr_{\cK}\lesq{\rho \, \Phi^{\dagger}(A)} \text{ for all } \rho \in \mathcal{T}_1(\cK) \text{ and } A \in \cB(\cH).
\end{align}
A completely positive and trace preserving linear map $\Phi: \mathcal{T}_1(\cK) \to \mathcal{T}_1(\cH)$ is called a quantum channel.
\par
The correspondence between quantum channels and quantum couplings described in \cite{DPT-AHP} is the following: for a given initial state $\rho \in \sh$ and a channel $\Phi: \mathcal{T}_1\ler{\mathrm{supp}(\rho)} \to \mathcal{T}_1(\cH)$ defined on trace-class operators acting on the support of $\rho,$ the quantum coupling $\Pi_{\Phi}$ of $\rho$ and $\Phi(\rho)$ is
\begin{align} \label{eq:Pi-phi-def}
    \Pi_{\Phi}=\ler{\Phi \otimes \mathrm{id}_{\mathcal{T}_1\ler{\cH^*}}} \ler{\Ket{\sqrt{\rho}}\Bra{\sqrt{\rho}}},
\end{align}
where $\Ket{\sqrt{\rho}}\Bra{\sqrt{\rho}}$ is the canonical purification (see \cite{Holevo}) of the state $\rho \in \sh.$ Here and throughout this note, we will use the canonical linear isomorphism between $\mathcal{T}_2(\cH)$ and $\hohc,$ which is the linear extension of the map
\begin{align}
    \psi \otimes \eta \mapsto \ket{\psi} \circ \eta \qquad \ler{\psi \in \cH, \eta \in \cH^*}. \nonumber
\end{align}
Accordingly, for an $X \in \mathcal{T}_2(\cH),$ the symbol $\Ket{X}$ denotes the map $\C \ni z \mapsto z X \in \mathcal{T}_2(\cH) \simeq \hohc,$ while $\Bra{X}$ stands for the map $\mathcal{T}_2(\cH) \in Y \mapsto \tr_{\cH}\lesq{X^*Y},$ where $X^*$ is the adjoint of $X.$ One can check that $\Pi_{\Phi}$ given by \eqref{eq:Pi-phi-def} is a positive operator on $\hohc$ such that its marginals are $\Phi(\rho)$ and $\rho^T,$ respectively, that is, $\tr_{\cH^*}\lesq{\Pi_{\Phi}}=\Phi(\rho)$ and $\tr_{\cH}\lesq{\Pi_{\Phi}}=\rho^T.$ Accordingly, the set of all quantum couplings of the states $\rho, \omega \in \sh$ (denoted by $\cC(\rho, \omega)$) was defined in \cite{DPT-AHP} by
\begin{align} \label{eq:q-coup-def}
\cC\ler{\rho, \omega}=\lers{\Pi \in \cS\ler{\cH \otimes \cH^*} \, \middle| \, \tr_{\cH^*} \lesq{\Pi}=\omega, \,  \tr_{\cH} \lesq{\Pi}=\rho^T}.
\end{align}
We recall that the \emph{energy} $E_A(\rho)$ of a quantum state $\rho$ admitting the spectral resolution $\rho=\sum_{j=0}^{\infty} \lambda_j \ket{\varphi_j}\bra{\varphi_j}$ with respect to the observable $A \in \cL(\cH)^{sa}$ is given by
\begin{align} 
    E_A(\rho)=\sum_{j=0}^{\infty} \lambda_j \norm{A \varphi_j}^2 \in [0,+\infty] \nonumber
\end{align}
if $\varphi_j$ is in the domain of $A$ for every $j,$ and $E_A(\rho)=+\infty$ otherwise.
Given a finite set of observable quantities represented by the self-adjoint (and possibly unbounded) operators $A_1, \dots, A_K \in \cL(\cH)^{sa},$ the transport cost $C(\Pi)$ of the coupling $\Pi \in \cC(\rho, \omega)$ is the sum of the energies of $\Pi$ with respect to the observables $A_k \otimes I^T - I \otimes A_k^T.$ That is,
\begin{align}
    C(\Pi)=\sum_{k=1}^K E_{A_k \otimes I^T - I \otimes A_k^T}(\Pi) \in [0, +\infty]. \nonumber
\end{align}
To avoid technicalities that would obscure the main idea, we assume from now on that all the observable quantities are bounded, that is, $A_1,\dots, A_K \in \cB(\cH)^{sa}.$ In this case, the cost of a coupling takes the simpler form
\begin{align}
    C(\Pi)=\tr_{\hohc}\lesq{\Pi\ler{\sum_{k=1}^K \ler{A_k \otimes I^T - I \otimes A_k^T}^2}}, \nonumber
\end{align}
and hence we introduce the cost operator $C_{\cA}$ corresponding to $\cA=\{A_1,\dots, A_K\}$ the following way:
\begin{align} \label{eq:C-A-def}
    C_{\cA}:=\sum_{k=1}^K \ler{A_k \otimes I^T - I \otimes A_k^T}^2.
\end{align}
As the computation in \cite[Sec. 5.2]{DPT-lecture-notes} shows, in the case of bounded observables, the cost of the coupling $\Pi_\Phi$ corresponding to the channel $\Phi$ (see \eqref{eq:Pi-phi-def}) can be expressed referring to the channel instead of the coupling:
\begin{align} \label{eq:cost-in-terms-of-channel}
    C\ler{\Pi_\Phi}&=\sum_{k=1}^K \tr_{\hohc}\lesq{\ler{\Phi \otimes \mathrm{id}_{\mathcal{T}_1\ler{\cH^*}}} \ler{\Ket{\sqrt{\rho}}\Bra{\sqrt{\rho}}} \ler{A_k \otimes I^T - I \otimes A_k^T}^2} \nonumber \\
    &=\sum_{k=1}^K \tr_{\hohc}\lesq{\Ket{\sqrt{\rho}}\Bra{\sqrt{\rho}}\ler{\Phi^{\dagger}\ler{A_k^2} \otimes I^T + \Phi^{\dagger}(I)\otimes \ler{A_k^T}^2-2 \Phi^{\dagger}\ler{A_k} \otimes A_k^T }} \nonumber \\
    &=\sum_{k=1}^K \ler{\Bra{\sqrt{\rho}} \Phi^{\dagger}(A_k^2) \otimes I^T \Ket{\sqrt{\rho}} + \Bra{\sqrt{\rho}} I \otimes \ler{A_k^T}^2 \Ket{\sqrt{\rho}} -2
    \Bra{\sqrt{\rho}} \Phi^{\dagger}(A_k) \otimes A_k^T  \Ket{\sqrt{\rho}}} \nonumber \\
    &=\sum_{k=1}^K \ler{\tr_{\cH}\lesq{\rho \, \Phi^{\dagger}(A_k^2)}
    +\tr_{\cH}\lesq{\rho A_k^2} -2 \tr_{\cH}\lesq{\sqrt{\rho}A_k\sqrt{\rho}\Phi^{\dagger}(A_k)}} \nonumber \\
    &=\sum_{k=1}^K \ler{\tr_{\cH}\lesq{\Phi(\rho) A_k^2}
    +\tr_{\cH}\lesq{\rho A_k^2} -2 \tr_{\cH}\lesq{\sqrt{\rho}A_k\sqrt{\rho}\Phi^{\dagger}(A_k)}}.
\end{align}
In the above computation, the identity $\ler{A \otimes B^T} \Ket{X}=\Ket{AXB},$ which is valid for all $A,B \in \cB(\cH)$ and $X \in \mathcal{T}_2(\cH),$ see \cite[Lemma 1]{DPT-AHP}, has been used several times, as well as the cyclicity of the trace for the product of two Hilbert-Schmidt operators.
\par
The \emph{quadratic quantum Wasserstein distance} $D_{\cA}(\rho, \omega)$ of the states $\rho, \omega \in \sh$ corresponding to the observables $\cA=\{A_1, \dots, A_K\}$ is defined by
\be \label{eq:qw-dist-def}
D_{\cA}^2(\rho, \omega)=\inf\lers{C(\Pi)\, \middle| \, \Pi \in \cC\ler{\rho, \omega}},
\ee
and if $A_1, \dots, A_k \in \cB(\cH)^{sa},$ then 
\begin{align}
    D_{\cA}^2(\rho, \omega)=\inf\lers{\tr_{\hohc}\lesq{\Pi C_{\cA}}\, \middle| \, \Pi \in \cC\ler{\rho, \omega}}. \nonumber
\end{align}
Let us introduce the notation 
\begin{align} \label{eq:CPTP-notation-def}
    \text{CPTP}(\rho, \omega)=\lers{\Phi: \mathcal{T}_1(\mathrm{supp}(\rho)) \to \mathcal{T}_1(\cH), \, \Phi \text{ is completely positive and trace preserving, and } \Phi(\rho)=\omega}.
\end{align}
By \eqref{eq:cost-in-terms-of-channel}, one can rewrite \eqref{eq:qw-dist-def} as follows:

\begin{align} \label{eq:qw-dist-channels}
    D_{\cA}^2(\rho, \omega)=\inf\lers{
    \sum_{k=1}^K \ler{\tr_{\cH}\lesq{\omega A_k^2}
    +\tr_{\cH}\lesq{\rho A_k^2} -2 \tr_{\cH}\lesq{\sqrt{\rho}A_k\sqrt{\rho}\Phi^{\dagger}(A_k)}}
    \, \middle| \, \Phi \in \text{CPTP}(\rho, \omega)}.
\end{align}
We will denote by $\langle C_{\cA}\rangle_{\Phi}^{\rho,\omega}$ the cost of the quantum channel $\Phi$ sending $\rho$ to $\omega$ with respect to the observables $\cA=\{A_1, \dots,A_K\},$ that is,
\begin{align} \label{eq:C-Phi-rho-omega-def}
    \langle C_{\cA}\rangle_{\Phi}^{\rho,\omega}
    =\sum_{k=1}^K \ler{\tr_{\cH}\lesq{\omega A_k^2}
    +\tr_{\cH}\lesq{\rho A_k^2} -2 \tr_{\cH}\lesq{\sqrt{\rho}A_k\sqrt{\rho}\Phi^{\dagger}(A_k)}}.
\end{align}
Recall that the Wasserstein distance is symmetric \cite[Remark 8]{DPT-AHP}, and the easiest way to demonstrate this is to use  the swap transpose \cite[Proposition 4]{DPT-AHP}. Let $(\cdot)^{ST}$ denote the swap transposition acting either on $\mathcal{T}_1(\hohc)$ or on $\cB(\hohc)$ which is the linear extension of the map $X \otimes Y^T \mapsto Y \otimes X^T.$ The adjoint (in the sense of \eqref{eq:dagger-adjoint-def}) of the operation $\Pi \mapsto \Pi^{ST}$ on $\mathcal{T}_1(\hohc)$ is the same operation on $\cB(\hohc)$ as seen from the computation
\begin{align}
    \tr_{\hohc}\lesq{\ler{\omega \otimes \rho^T}^{ST}\ler{A \otimes B^T}}&=\tr_{\hohc}\lesq{\ler{\rho \otimes \omega^T}\ler{A \otimes B^T}} \nonumber\\ 
    =\tr_{\cH}\lesq{\rho A} \tr_{\cH}\lesq{\omega B}
    &=\tr_{\hohc}\lesq{\ler{\omega \otimes \rho^T}\ler{B \otimes A^T}}
    =\tr_{\hohc}\lesq{\ler{\omega \otimes \rho^T}\ler{A \otimes B^T}^{ST}}, \nonumber
\end{align}
where $\rho,\omega \in \mathcal{T}_1(\cH),$ and $A,B \in \cB(\cH),$ and we used the identity $\tr_{\cH^*}\lesq{X^T Y^T}=\tr_{\cH}\lesq{XY}.$
One can see from the spectral decomposition of the quadratic cost operator $C_{\cA}$ defined in \eqref{eq:C-A-def} that it is invariant under $(\cdot)^{ST}$. However, for any $\Pi \in \cC\ler{\rho, \omega}$ it is immediate that $\Pi^{ST} \in \cC\ler{\omega, \rho}$. Note that although transposition is not completely positive it is still positive. We conclude that for any sates $\rho$ and $\omega$ and any coupling $\Pi \in \cC\ler{\rho, \omega}$ and any $\cA=\lers{A_1,\dots, A_K},$ the following holds:
\begin{align}
C\ler{\Pi^{ST}}
=\tr_{\hohc} \lesq{\Pi^{ST} \, C_{\cA}}
=\tr_{\hohc} \lesq{\Pi \, C_{\cA}^{ST}}
=\tr_{\hohc} \lesq{\Pi \, C_{\cA}}
=C(\Pi). \nonumber
\end{align}
That is, $\Pi$ and $\Pi^{ST}$ are of the same cost.
\par
Let $\rho \in \sh$ be a state, and let $\Phi: \mathcal{T}_1(\mathrm{supp}(\rho)) \to \mathcal{T}_1(\cH)$ be a quantum channel, that is, a completely positive and trace-preserving map. Then the Petz recovery map (see, e.g., \cite[Section 12.3]{Wilde-QIT}) for $\rho$ and $\Phi$ is the completely positive and trace-preserving map $\Phi_{\text{rec}}^{\rho}$ defined by
\begin{align} \label{eq:Petz-rec-def}
\Phi_{\text{rec}}^{\rho}: \mathcal{T}_1(\mathrm{supp}(\Phi(\rho))) \to \mathcal{T}_1(\cH); \, X \mapsto
\Phi_{\text{rec}}^{\rho}(X):=\rho^{1/2}\Phi^\dagger\ler{\Phi(\rho)^{-1/2}X\Phi(\rho)^{-1/2}}\rho^{1/2},
\end{align}
where we denoted simply by $X$ the operator $X \oplus 0_{\ler{\mathrm{supp}(\Phi(\rho))}^{\perp}}$ on the right-hand side of \eqref{eq:Petz-rec-def}, with a slight abuse of notation. In the above formula \eqref{eq:Petz-rec-def}, the operator $\Phi(\rho)^{-1/2}$ is the pseudo-inverse of $\Phi(\rho)^{1/2},$ that is, $\Phi(\rho)^{-1/2}=\sum_{j: \mu_j>0} \mu_j^{-1/2}\ket{g_j}\bra{g_j}$ where $\Phi(\rho)=\sum_{j=1}^{\infty} \mu_j\ket{g_j}\bra{g_j}$ is the spectral decomposition of $\Phi(\rho),$ and the eigenvalues are ordered decreasingly: $\mu_1 \geq \mu_2 \geq \mu_3 \geq \dots.$ 
It is important to note that $\Phi(\rho)^{-1/2}$ is bounded if and only if $\Phi(\rho)$ is of finite rank. In this case, $\Phi(\rho)^{-1/2}X\Phi(\rho)^{-1/2}$ is also bounded for any $X \in \mathcal{T}_1(\mathrm{supp}(\Phi(\rho))),$ and hence the definition \eqref{eq:Petz-rec-def} makes perfect sense. However, if $\Phi(\rho)$ is of infinite rank, then $\Phi(\rho)^{-1/2}X\Phi(\rho)^{-1/2}$ can be unbounded for some $X \in \mathcal{T}_1(\mathrm{supp}(\Phi(\rho))),$ which means that the definition \eqref{eq:Petz-rec-def} does not make sense without further discussion as $\Phi^{\dagger}$ is a map from $\cB(\cH)$ to $\cB(\mathrm{supp}(\rho)).$ So assume that $\mathrm{rank}(\Phi(\rho))=\infty,$ and let us define the set
\begin{align} 
    S_{\Phi(\rho)} := \lers{X \in \mathcal{T}_1(\mathrm{supp}(\Phi(\rho))) \, \middle| \, X=\sum_{i,j=1}^N x_{i,j}\ket{g_i}\bra{g_j} \text{ for some }N\in \N \text{ and } x_{1,1}, x_{1,2} \dots, x_{N,N} \in \C}. \nonumber
\end{align}
Note that $S_{\Phi(\rho)}$ is a dense linear subspace of $\mathcal{T}_1(\mathrm{supp}(\Phi(\rho)))$ in the trace norm topology. Indeed, finite rank operators are dense in $\mathcal{T}_1(\mathrm{supp}(\Phi(\rho))),$ and one can approximate rank-$1$ projections there by elements of $S_{\Phi(\rho)}$ arbitrarily well. 
The next step is to show that $\Phi(\rho)^{-1/2}X\Phi(\rho)^{-1/2}$ is a bounded operator on $\cH$ whenever $X \in S_{\Phi(\rho)}.$ By the definition of the pseudo-inverse, $\Phi(\rho)^{-1/2}X\Phi(\rho)^{-1/2}$ annihilates every vector in the ortho-complement of $\mathrm{supp}(\Phi(\rho)),$ so we can restrict our attention to $\mathrm{supp}(\Phi(\rho)).$ The space of all finite linear combinations of the eigenvectors of $\Phi(\rho)$ corresponding to positive eigenvalues, that is, $\mathrm{linspan}\ler{\{g_j\}_{j: \mu_j>0}}$ is a dense subspace of $\mathrm{supp}(\Phi(\rho)),$ and it is easy to check that
\begin{align} \label{eq:check-on-linspan}
    \Phi(\rho)^{-1/2}X\Phi(\rho)^{-1/2}=\sum_{i,j=1}^N \mu_i^{-1/2}\mu_j^{-1/2} x_{i,j} \ket{g_i}\bra{g_j}
\end{align}
there when $X=\sum_{i,j=1}^N x_{i,j}\ket{g_i}\bra{g_j}.$ The right-hand side of \eqref{eq:check-on-linspan} is of finite rank and hence bounded. Therefore, $\Phi(\rho)^{-1/2}X\Phi(\rho)^{-1/2}$ has a unique bounded linear extension on the whole $\mathrm{supp}(\Phi(\rho)).$ 
So $\Phi_{\text{rec}}^{\rho}$ is well-defined on $S_{\Phi(\rho)},$ moreover, it is completely positive and trace preserving on there: the complete positivity follows from the fact that is it a composition of completely positive maps, while the trace preserving property follows from the direct computation
\begin{align}
    \tr_{\cH}\lesq{\Phi_{\text{rec}}^{\rho}(X)}
    &=\tr_{\cH}\lesq{\rho^{1/2}\Phi^\dagger\ler{\Phi(\rho)^{-1/2}X\Phi(\rho)^{-1/2}}\rho^{1/2}}
    =\tr_{\cH}\lesq{\rho \,\Phi^{\dagger}\ler{\Phi(\rho)^{-1/2}X\Phi(\rho)^{-1/2}}} \nonumber \\
    &=\tr_{\cH} \lesq{\Phi(\rho) \Phi(\rho)^{-1/2}X\Phi(\rho)^{-1/2}}=\tr_{\cH}[X], \nonumber
\end{align}
where we used the cyclic property of the trace for the product of two Hilbert-Schmidt operators, and the definition \eqref{eq:dagger-adjoint-def} of $\Phi^{\dagger}.$ It is known that positive and trace preserving maps are contractive with respect to the trace norm, see, e.g., \cite[Theorem 9.2]{nielsen-chuang}. In particular, $\Phi_{\text{rec}}^{\rho}$ is bounded with respect to the trace norm on the dense linear subspace $S_{\Phi(\rho)}\subset \mathcal{T}_1(\mathrm{supp}(\Phi(\rho))),$ and hence it has a unique continuous extension on the whole $\mathcal{T}_1(\mathrm{supp}(\Phi(\rho))).$ So the map $\Phi_{\text{rec}}^{\rho}: \mathcal{T}_1(\mathrm{supp}(\Phi(\rho))) \to \mathcal{T}_1(\cH)$ is this unique continuous extension if $\Phi(\rho)$ is of infinite rank. This extension inherits the trace-preserving property and the complete positivity of $\Phi_{\text{rec}}^{\rho}$ on $S_{\Phi(\rho)}.$ 
\par
The Petz recovery map \eqref{eq:Petz-rec-def}
reverses the action of $\Phi$ on $\rho,$ that is, $\Phi_{rec}^{\rho}\ler{\Phi(\rho)}=\rho,$ and it is of the same cost as $\Phi$ as a transport plan between $\rho$ and $\Phi(\rho).$
More precisely, if $\rho$ and $\omega$ are states on $\cH$ and $\Phi$ is a quantum channel sending $\rho$ to $\omega,$  then, for any finite collection of observables $\cA=\{A_1,\dots, A_K\},$ the cost of the Petz recovery map $\Phi_{\text{rec}}$ is the same as the cost of $\Phi,$ that is,
\begin{align} \label{eq:costs-are-equal}
\langle C_{\cA} \rangle_{\Phi_{rec}}^{\omega,\rho}=\langle C_{\cA}\rangle_\Phi^{\rho,\omega},
\end{align}
where the costs appearing in \eqref{eq:costs-are-equal} are defined in \eqref{eq:C-Phi-rho-omega-def}.
Indeed, by \eqref{eq:Petz-rec-def} one gets
\begin{align}
\Phi_{rec}(\omega^{1/2} A_k\omega^{1/2})=\rho^{1/2}\Phi^{\dagger}\ler{A_k}\rho^{1/2} \nonumber
\end{align}
for all $k=1,\dots,K.$ 
Consequently, 
\begin{align}
\tr_{\cH}\lesq{\rho^{1/2} A_k \rho^{1/2}\Phi^\dagger\ler{A_k}}
=\tr_{\cH}\lesq{A_k \Phi_{rec}\ler{\omega^{1/2}A_k\omega^{1/2}}}
=\tr_{\cH}\lesq{\Phi_{rec}^{\dagger}( A_k) \omega^{1/2}A_k\omega^{1/2}} \nonumber
\end{align}
for all $k=1,\dots, K$, and therefore
\begin{align} \label{eq:covariances-are-equal}
\sum_{k=1}^K
\tr_{\cH}\lesq{\rho^{1/2} A_k \rho^{1/2}\Phi^\dagger\ler{A_k}}=
\sum_{k=1}^K
\tr_{\cH}\lesq{\omega^{1/2}A_j\omega^{1/2}\Phi_{\text{rec}}^{\dagger}( A_j)}. 
\end{align}
Now the statement \eqref{eq:costs-are-equal} follows from \eqref{eq:covariances-are-equal} and \eqref{eq:C-Phi-rho-omega-def}.

The next result shows that not only the cost of the Petz recovery map is the same as that of the original channel, but the Petz recovery channel translates to swap transposition when considering quantum couplings to describe transport plans instead of quantum channels. The precise statement, using the convention that $\omega^{-1/2}$ denotes the square root of the pseudo-inverse of $\omega,$ reads as follows. 

\begin{theorem} \label{thm:Petz-swap}
Let $\Phi$ be a quantum channel acting on $\sh$ which sends $\rho$ to $\omega,$ and let $\Pi_\Phi \in \cC\ler{\rho,\omega}$ be the corresponding coupling. Let $\Phi_{rec}$ denote the Petz recovery map given by
\begin{align} 
    \Phi_{rec}(X)=\rho^{1/2}\Phi^{\dagger}\ler{\omega^{-1/2} X \omega^{-1/2}}\rho^{1/2}, \nonumber
\end{align}
and let $\Pi_{rec}$ denote the coupling that corresponds to $\Phi_{rec}$ according to \eqref{eq:Pi-phi-def}. Then
\begin{align} 
\ler{\Pi_\Phi}^{ST}=\Pi_{rec}. \nonumber
\end{align}
\end{theorem}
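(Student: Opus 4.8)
\emph{Proof proposal.} The plan is to verify the operator identity $\ler{\Pi_\Phi}^{ST}=\Pi_{rec}$ \emph{weakly}, by pairing both sides against the test operators $A\otimes B^T$ with $A,B\in\cB(\cH)$. Since $\ler{\Pi_\Phi}^{ST}$ and $\Pi_{rec}$ are both trace-class, and the linear span of $\lers{A\otimes B^T \, \middle| \, A,B\in\cB(\cH)}$ is weak-$*$ dense in $\cB(\hohc)=\ler{\mathcal{T}_1(\hohc)}^*$, it suffices to prove
\[
\tr_{\hohc}\lesq{\ler{\Pi_\Phi}^{ST}\ler{A\otimes B^T}}=\tr_{\hohc}\lesq{\Pi_{rec}\ler{A\otimes B^T}} \quad \text{for all } A,B\in\cB(\cH).
\]
Using that $(\cdot)^{ST}$ is its own adjoint (established in Subsection \ref{subsec:basic-notions-notation}) together with $\ler{A\otimes B^T}^{ST}=B\otimes A^T$, the left-hand side rewrites as $\tr_{\hohc}\lesq{\Pi_\Phi \ler{B\otimes A^T}}$.

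Next I would evaluate each pairing by pushing the channel onto the observables. For the left-hand side, inserting \eqref{eq:Pi-phi-def}, transferring $\Phi$ to $\Phi^{\dagger}$ via \eqref{eq:dagger-adjoint-def}, and then applying the identity $\ler{M\otimes N^T}\Ket{X}=\Ket{MXN}$ (see \cite[Lemma 1]{DPT-AHP}) gives
\[
\tr_{\hohc}\lesq{\Pi_\Phi\ler{B\otimes A^T}}=\Bra{\sqrt{\rho}}\ler{\Phi^{\dagger}(B)\otimes A^T}\Ket{\sqrt{\rho}}=\tr_{\cH}\lesq{\sqrt{\rho}\,\Phi^{\dagger}(B)\sqrt{\rho}\,A}.
\]
For the right-hand side I first need the adjoint of the recovery map; a direct computation with \eqref{eq:dagger-adjoint-def}, carried out on the dense domain $S_{\Phi(\rho)}$ to sidestep the possible unboundedness of $\omega^{-1/2}$, yields $\Phi_{rec}^{\dagger}(A)=\omega^{-1/2}\Phi\ler{\rho^{1/2}A\rho^{1/2}}\omega^{-1/2}$. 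The same manipulation as above then gives
\[
\tr_{\hohc}\lesq{\Pi_{rec}\ler{A\otimes B^T}}=\tr_{\cH}\lesq{\sqrt{\omega}\,\Phi_{rec}^{\dagger}(A)\sqrt{\omega}\,B}=\tr_{\cH}\lesq{P\,\Phi\ler{\rho^{1/2}A\rho^{1/2}}P\,B},
\]
where $P$ is the orthogonal projection onto $\mathrm{supp}(\omega)$ and I used $\sqrt{\omega}\,\omega^{-1/2}=\omega^{-1/2}\sqrt{\omega}=P$, which is bounded and thus harmless.

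The one genuinely non-formal step, and the place where the hypothesis $\Phi(\rho)=\omega$ and complete positivity enter essentially, is the compression lemma $P\,\Phi\ler{\rho^{1/2}A\rho^{1/2}}P=\Phi\ler{\rho^{1/2}A\rho^{1/2}}$, i.e. that $\Phi\ler{\rho^{1/2}A\rho^{1/2}}$ already has range and co-range inside $\mathrm{supp}(\omega)$. Writing $\Phi(\cdot)=\sum_k K_k(\cdot)K_k^*$ in Kraus form, any $v\perp\mathrm{supp}(\omega)$ satisfies $0=\inner{v}{\omega v}=\sum_k\norm{\rho^{1/2}K_k^* v}^2$, so $\rho^{1/2}K_k^* v=0$ for every $k$; consequently $\inner{v}{\Phi\ler{\rho^{1/2}A\rho^{1/2}}w}=\sum_k\inner{\rho^{1/2}K_k^* v}{A\rho^{1/2}K_k^* w}=0$ for all $w$, and symmetrically on the other side. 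Granting the lemma, the right-hand pairing becomes $\tr_{\cH}\lesq{\Phi\ler{\rho^{1/2}A\rho^{1/2}}B}=\tr_{\cH}\lesq{\rho^{1/2}A\rho^{1/2}\Phi^{\dagger}(B)}$, which by cyclicity of the trace equals the left-hand value $\tr_{\cH}\lesq{\sqrt{\rho}\,\Phi^{\dagger}(B)\sqrt{\rho}\,A}$.

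This matches the two pairings for all $A,B\in\cB(\cH)$ and hence proves $\ler{\Pi_\Phi}^{ST}=\Pi_{rec}$. I expect the compression lemma to be the main obstacle; the secondary technical nuisance is the pseudoinverse $\omega^{-1/2}$ in the infinite-rank case, which I would control exactly as in the well-definedness discussion preceding the theorem, namely by working on $S_{\Phi(\rho)}$ and observing that the problematic factors always appear in the bounded combination $\sqrt{\omega}\,\omega^{-1/2}=P$.
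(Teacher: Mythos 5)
Your proposal is correct, and it takes a genuinely different route from the paper's proof. The paper argues in coordinates: it expands $\Ket{\sqrt{\rho}}\Bra{\sqrt{\rho}}$ and $\Ket{\sqrt{\omega}}\Bra{\sqrt{\omega}}$ in the eigenbases of $\rho$ and $\omega$, re-expands $\ket{g_k}\bra{g_l}$ in the basis $\{e_i\}$, collapses the sums by completeness, and matches matrix elements of both sides against the matrix units $E_{m,n,r,s}=\ket{e_m}\bra{e_n}\otimes\ler{\ket{e_s}\bra{e_r}}^T$ (a special case of your test operators $A\otimes B^T$). You instead stay basis-free: you move both channels onto the observables via \eqref{eq:dagger-adjoint-def}, identify $\sqrt{\omega}\,\Phi_{rec}^{\dagger}(A)\,\sqrt{\omega}=P\,\Phi\ler{\rho^{1/2}A\rho^{1/2}}\,P$, and reduce the theorem to the compression lemma $P\,\Phi\ler{\rho^{1/2}A\rho^{1/2}}\,P=\Phi\ler{\rho^{1/2}A\rho^{1/2}}$, proved via a Kraus decomposition; both pairings then collapse to $\tr_{\cH}\lesq{\sqrt{\rho}\,\Phi^{\dagger}(B)\sqrt{\rho}\,A}$ by cyclicity. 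What your route buys is transparency about supports: the lemma isolates exactly where complete positivity and $\Phi(\rho)=\omega$ enter, and it treats non-faithful $\omega$ explicitly. Indeed, in the paper's computation the sums over $k,l$ effectively run only over indices with $\mu_k,\mu_l>0$ (the terms with zero eigenvalues are annihilated by the pseudo-inverse), so the completeness step there really produces $\bra{e_r}P\ket{e_n}\bra{e_m}P\ket{e_s}$ with $P$ the support projection of $\omega$, and your compression lemma (in its dual form $\rho^{1/2}\Phi^{\dagger}\ler{PBP}\rho^{1/2}=\rho^{1/2}\Phi^{\dagger}(B)\rho^{1/2}$) is precisely what reduces this to $\delta_{r,n}\delta_{m,s}$ in general; conversely, the paper's computation is completely elementary and needs no structure theory of channels. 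Two refinements you may consider: (i) the unbounded expression $\omega^{-1/2}\Phi\ler{\rho^{1/2}A\rho^{1/2}}\omega^{-1/2}$ for $\Phi_{rec}^{\dagger}(A)$ can be bypassed entirely by pairing $\Pi_{rec}=\sum_{k,l}\rho^{1/2}\Phi^{\dagger}\ler{\ket{g_k}\bra{g_l}}\rho^{1/2}\otimes\ler{\ket{g_l}\bra{g_k}}^T$ against $A\otimes B^T$ directly, which involves only bounded objects and yields $\tr_{\cH}\lesq{P\,\Phi\ler{\rho^{1/2}A\rho^{1/2}}\,P B}$ without discussing $\Phi_{rec}^{\dagger}$; (ii) the compression lemma does not need Kraus operators: for $A\geq 0$ one has $0\leq\rho^{1/2}A\rho^{1/2}\leq\norm{A}\rho$, hence $0\leq\Phi\ler{\rho^{1/2}A\rho^{1/2}}\leq\norm{A}\,\omega$ by positivity of $\Phi$, which already forces the support of $\Phi\ler{\rho^{1/2}A\rho^{1/2}}$ into $\mathrm{supp}(\omega)$, and general $A$ follows by splitting into positive parts.
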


\begin{proof}
Assume that $\rho$ and $\omega$ admit the following spectral resolutions: $\rho=\sum_{i=1}^{\infty} \lambda_i \ket{e_i}\bra{e_i}$ and $\omega=\sum_{j=1}^{\infty} \mu_j \ket{g_j}\bra{g_j}.$ Let $\{f_i\}$ and $\{h_j\}$ denote the dual bases of $\{e_i\}$ and $\{g_j\},$ respectively, in $\cH^*.$ Then 
\begin{align} 
    \Pi_{\Phi}
    =\ler{\Phi \otimes \mathrm{id}_{\mathcal{T}_1(\cH^*)}}\ler{\Ket{\sqrt{\rho}}\Bra{\sqrt{\rho}}}
    =\ler{\Phi \otimes \mathrm{id}_{\mathcal{T}_1(\cH^*)}}\ler{\Ket{\sum_i \sqrt{\lambda_i} e_i \otimes f_i}\Bra{\sum_j \sqrt{\lambda_j} e_j \otimes f_j}} \nonumber \\
    =\ler{\Phi \otimes \mathrm{id}_{\mathcal{T}_1(\cH^*)}}\ler{\sum_{i,j}\sqrt{\lambda_i \lambda_j} \ket{e_i}\bra{e_j} \otimes \ler{\ket{e_j}\bra{e_i}}^T}
    =\sum_{i,j} \sqrt{\lambda_i \lambda_j}\Phi\ler{\ket{e_i}\bra{e_j}}\otimes \ler{\ket{e_j}\bra{e_i}}^T, \nonumber
\end{align}
where by the infinite sum $\sum_{i,j}\sqrt{\lambda_i \lambda_j} \ket{e_i}\bra{e_j} \otimes \ler{\ket{e_j}\bra{e_i}}^T$ we mean the trace norm limit 
$$
\lim_{M,N \to \infty}^{\norm{\cdot}_1} \sum_{i=1}^M \sum_{j=1}^N \sqrt{\lambda_i \lambda_j} \ket{e_i}\bra{e_j} \otimes \ler{\ket{e_j}\bra{e_i}}^T,
$$ and similarly, 
$$
\sum_{i,j} \sqrt{\lambda_i \lambda_j}\Phi\ler{\ket{e_i}\bra{e_j}}\otimes \ler{\ket{e_j}\bra{e_i}}^T= \lim_{M,N \to \infty}^{\norm{\cdot}_1} \sum_{i=1}^M \sum_{j=1}^N  \sqrt{\lambda_i \lambda_j}\Phi\ler{\ket{e_i}\bra{e_j}}\otimes \ler{\ket{e_j}\bra{e_i}}^T.
$$
Therefore,
\begin{align} \label{eq:Pi-Phi-final}
    \ler{\Pi_\Phi}^{ST}
    =\sum_{i,j} \sqrt{\lambda_i \lambda_j}\ket{e_j}\bra{e_i} \otimes \ler{\Phi\ler{\ket{e_i}\bra{e_j}}}^T
    =\sum_{i,j} \rho^{1/2}\ket{e_j}\bra{e_i}\rho^{1/2} \otimes \ler{\Phi\ler{\ket{e_i}\bra{e_j}}}^T.
\end{align}
On the other hand, 
\begin{align} 
    \Pi_{rec}
    =\ler{\Phi_{rec} \otimes \mathrm{id}_{\mathcal{T}_1(\cH^*)}}\ler{\Ket{\sqrt{\omega}}\Bra{\sqrt{\omega}}}
    =\ler{\Phi_{rec} \otimes \mathrm{id}_{\mathcal{T}_1(\cH^*)}}\ler{\Ket{\sum_k \sqrt{\mu_k} g_k \otimes h_k}\Bra{\sum_l \sqrt{\mu_l} g_l \otimes h_l}} \nonumber \\
    =\ler{\Phi_{rec} \otimes \mathrm{id}_{\mathcal{T}_1(\cH^*)}}\ler{\sum_{k,l}\sqrt{\mu_k \mu_l} \ket{g_k}\bra{g_l} \otimes \ler{\ket{g_l}\bra{g_k}}^T}
    =\sum_{k,l} \sqrt{\mu_k \mu_l}\Phi_{rec}\ler{\ket{g_k}\bra{g_l}}\otimes \ler{\ket{g_l}\bra{g_k}}^T \nonumber \\
    =\sum_{k,l} \sqrt{\mu_k \mu_l}\rho^{1/2}\Phi^{\dagger}\ler{\omega^{-1/2}\ket{g_k}\bra{g_l}\omega^{-1/2}}\rho^{1/2}\otimes \ler{\ket{g_l}\bra{g_k}}^T  \nonumber \\
    =\sum_{k,l} \sqrt{\mu_k \mu_l}\rho^{1/2}\Phi^{\dagger}\ler{\mu_k^{-1/2}\mu_l^{-1/2}\ket{g_k}\bra{g_l}}\rho^{1/2}\otimes \ler{\ket{g_l}\bra{g_k}}^T \nonumber \\
    =\sum_{k,l} \rho^{1/2}\Phi^{\dagger}\ler{\ket{g_k}\bra{g_l}}\rho^{1/2}\otimes \ler{\ket{g_l}\bra{g_k}}^T. \nonumber
\end{align}
We shall utilize now that 
\begin{align}
    \ket{g_k}\bra{g_l}=\sum_{r,s} \bra{e_r}\ket{g_k}\bra{g_l}\ket{e_s}\ket{e_r}\bra{e_s} \text{ and }
    \ket{g_l}\bra{g_k}=\sum_{m,n} \bra{e_m}\ket{g_l}\bra{g_k}\ket{e_n}\ket{e_m}\bra{e_n}, \nonumber
\end{align}
where, again, the infinite sums stand for the trace-norm limit of the finite partial sums.
Accordingly,
\begin{align} \label{eq:Pi-rec-final}
    \sum_{k,l} \rho^{1/2}\Phi^{\dagger}\ler{\ket{g_k}\bra{g_l}}\rho^{1/2}\otimes \ler{\ket{g_l}\bra{g_k}}^T \nonumber  \\
    =\sum_{k,l} \sum_{r,s} \sum_{m,n} \bra{e_r}\ket{g_k}\bra{g_l}\ket{e_s} \bra{e_m}\ket{g_l}\bra{g_k}\ket{e_n} \rho^{1/2}\Phi^{\dagger}\ler{\ket{e_r}\bra{e_s}}\rho^{1/2} \otimes \ler{\ket{e_m}\bra{e_n}}^T \nonumber \\
    =\sum_{r,s} \sum_{m,n} \sum_{k,l} \bra{e_r}\ket{g_k}\bra{g_k}\ket{e_n} \bra{e_m}\ket{g_l}\bra{g_l}\ket{e_s} \rho^{1/2}\Phi^{\dagger}\ler{\ket{e_r}\bra{e_s}}\rho^{1/2} \otimes \ler{\ket{e_m}\bra{e_n}}^T \nonumber \\
    =\sum_{r,s} \sum_{m,n} \delta_{r,n}\delta_{m,s} \rho^{1/2}\Phi^{\dagger}\ler{\ket{e_r}\bra{e_s}}\rho^{1/2} \otimes \ler{\ket{e_m}\bra{e_n}}^T
    =\sum_{r,s} \rho^{1/2}\Phi^{\dagger}\ler{\ket{e_r}\bra{e_s}}\rho^{1/2} \otimes \ler{\ket{e_s}\bra{e_r}}^T. 
\end{align}
Let us compare the formula for $\ler{\Pi_\Phi}^{ST}$ given by the right-hand side of \eqref{eq:Pi-Phi-final} and the formula for $\Pi_{rec}$ given by the right-hand side of \eqref{eq:Pi-rec-final}. To do so, consider the matrix unit 
\begin{align} 
    E_{m,n,r,s}:=\ket{e_m}\bra{e_n} \otimes \ler{\ket{e_s}\bra{e_r}}^T \nonumber
\end{align}
for arbitrary indices $m,n,r,$ and $s,$ and let us compute 
\begin{align}
    \tr_{\hohc} \lesq{\ler{\Pi_\Phi}^{ST} E_{m,n,r,s}} \text{ and } \tr_{\hohc} \lesq{\Pi_{rec} \, E_{m,n,r,s}}. \nonumber
\end{align}
By \eqref{eq:Pi-Phi-final}, we get
\begin{align} \label{eq:Pi-Phi-test}
    \tr_{\hohc} \lesq{\ler{\Pi_\Phi}^{ST} E_{m,n,r,s}}
    =\tr_{\hohc} \lesq{\ler{\sum_{i,j} \rho^{1/2}\ket{e_j}\bra{e_i}\rho^{1/2} \otimes \ler{\Phi\ler{\ket{e_i}\bra{e_j}}}^T}\ler{\ket{e_m}\bra{e_n} \otimes \ler{\ket{e_s}\bra{e_r}}^T}} \nonumber \\
    =\sum_{i,j} \bra{e_n} \rho^{1/2}\ket{e_j}\bra{e_i}\rho^{1/2} \ket{e_m}\cdot \bra{e_r}\Phi\ler{\ket{e_i}\bra{e_j}}\ket{e_s}
    =\sum_{i,j} \lambda_j^{1/2}\delta_{j,n}\lambda_i^{1/2} \delta_{i,m} \cdot \bra{e_r}\Phi\ler{\ket{e_i}\bra{e_j}}\ket{e_s} \nonumber \\
    =\ler{\lambda_m \lambda_n}^{1/2} \bra{e_r}\Phi\ler{\ket{e_m}\bra{e_n}}\ket{e_s}.
\end{align}
On the other hand, by \eqref{eq:Pi-rec-final}, we get
\begin{align} \label{eq:Pi-rec-test}
    \tr_{\hohc} \lesq{\Pi_{rec} \, E_{m,n,r,s}}
    =\tr_{\hohc} \lesq{\ler{\sum_{i,j} \rho^{1/2}\Phi^{\dagger}\ler{\ket{e_i}\bra{e_j}}\rho^{1/2} \otimes \ler{\ket{e_j}\bra{e_i}}^T} \ler{\ket{e_m}\bra{e_n} \otimes \ler{\ket{e_s}\bra{e_r}}^T}} \nonumber \\
    =\sum_{i,j} \bra{e_n}\rho^{1/2}\Phi^{\dagger}\ler{\ket{e_i}\bra{e_j}}\rho^{1/2} \ket{e_m}\cdot \bra{e_r}\ket{e_j}\bra{e_i}\ket{e_s}
    =\sum_{i,j} \ler{\lambda_m \lambda_n}^{1/2} \bra{e_n}\Phi^{\dagger}\ler{\ket{e_i}\bra{e_j}} \ket{e_m}\cdot \delta_{r,j}\delta_{i,s} \nonumber \\
    =\ler{\lambda_m \lambda_n}^{1/2} \bra{e_n}\Phi^{\dagger}\ler{\ket{e_s}\bra{e_r}} \ket{e_m}
    =\ler{\lambda_m \lambda_n}^{1/2} \tr_{\cH}\lesq{\ket{e_m} \bra{e_n} \Phi^{\dagger}\ler{\ket{e_s}\bra{e_r}}} \nonumber \\
    =\ler{\lambda_m \lambda_n}^{1/2} \tr_{\cH}\lesq{\Phi\ler{\ket{e_m} \bra{e_n}} \ket{e_s}\bra{e_r}} 
    =\ler{\lambda_m \lambda_n}^{1/2} \bra{e_r}\Phi\ler{\ket{e_m}\bra{e_n}}\ket{e_s}.
\end{align}
The final formulae of \eqref{eq:Pi-Phi-test} and \eqref{eq:Pi-rec-test} coincide, hence $\Pi_{rec}=\ler{\Pi_{\Phi}}^{ST}$ indeed.
\end{proof}

\begin{small}
\bibliographystyle{plainurl}  
\bibliography{references.bib}
\end{small}

\end{document}